\def \E{\mathbb{E}}
\def \H{\mathbb{H}}
\def \L{\mathbb{L}}
\def \M{\mathbb{M}}
\def \N{\mathbb{N}}
\def \P{\mathbb{P}}
\def \Q{\mathbb{Q}}
\def \R{\mathbb{R}}
\def \Z{\mathbb{Z}}
\def \G{\mathbb{G}}
\def\={\;=\;}
\def\.{\;.}
\def \i{1\!\mbox{\rm I}}
\def\1{{\bf 1}}
\def\normeL2#1{\left\|{#1}\right\|_{L^2}}
\newcommand{\alias}[2]{
	\providecommand{#1}{}
	\renewcommand{#1}{#2}
}
\alias{\N}{\mathcal{N}}
\alias{\L}{\mathcal{L}}
\alias{\Z}{\mathbb{Z}}
\alias{\Q}{\mathbb{Q}}
\alias{\R}{\mathbb{R}}
\alias{\C}{\mathcal{C}}
\alias{\T}{\mathbb{T}}
\alias{\E}{\mathbb{E}}
\alias{\H}{\mathcal{H}}
\alias{\B}{\mathcal{B}}
\alias{\M}{\mathcal{M}}
\alias{\G}{\mathcal{G}}
\alias{\Y}{Y_{\bullet}}
\newcommand{\nc}{\newcommand}
\nc{\cA}{{\mathcal A}} \nc{\cB}{{\mathcal B}} \nc{\cC}{{\mathcal
		C}} \nc{\cD}{{\mathcal D}} \nc{\bbD}{\mathbb{D}}
\nc{\cG}{{\mathcal G}} \nc{\cF}{{\mathcal F}} \nc{\cS}{{\mathcal
		S}} \nc{\cU}{{\mathcal U}} \nc{\cH}{{\mathcal H}}
\nc{\cK}{{\mathcal K}} \nc{\cM}{{\mathcal M}} \nc{\cO}{{\mathcal
		O}} \nc{\cP}{{\mathcal P}} \nc{\bbE}{\mathbb{E}}
\nc{\bbEP}{\mathbb{E}_{\mathbb{P}}}\nc{\bbL}{\mathbb{L}}
\nc{\bbP}{\mathbb{P}} \nc{\bbQ}{\mathbb{Q}} \nc{\del}{\partial}
\nc{\Om}{\Omega} \nc{\om}{\omega} \nc{\bbR}{\mathbb{R}}
\nc{\bbC}{\mathbb{C}} \nc{\bfr}{\begin{flushright}}
	\nc{\efr}{\end{flushright}} \nc{\dXt}{\delta q_{t}}
\nc{\dXs}{\delta q_{s}} \nc{\bs}{\blacksquare} \nc{\dX}{\delta q}
\nc{\dY}{\Delta Y}
\nc{\dnkx}{\left(X(T^{n}_{k})-X(T^{n}_{k-1})\right)}
\nc{\esssup}{\mathrm{ess}\mbox{ }\mathrm{sup}}
\nc{\essinf}{\mathrm{ess}\mbox{ } \mathrm{inf}}
\nc{\dhats}{\widehat{\delta_s}}
\nc{\chf}{\mbox{$\mathbf1$}}
\nc{\ind}{\mathds{1}}
\nc{\mum}{ \mu_{\rm m} }
\nc{\muv}{ \mu_{\rm v} }
\nc{\mumv}{ \mu_{\rm mv} }
\nc{\Hm}{ H_{\rm m} }
\nc{\Hv}{ H_{\rm v} }
\newcommand{\myblue}[1]{\textcolor{black}{#1}}
\newmdenv[leftmargin=\dimexpr-0.4em, innerleftmargin=0.5em,
rightmargin=\dimexpr-0.4em, innerrightmargin=0.5em,
linewidth=2pt,linecolor=red, topline=false, bottomline=false,
innertopmargin=0pt,innerbottommargin=0pt,skipbelow=0pt,skipabove=0pt,%
]{notex}
\newenvironment{note}%
{\vskip\dimexpr\dp\strutbox-\prevdepth\relax\notex\strut\ignorespaces}%
{\xdef\notetpd{\the\prevdepth}\endnotex\vskip-\notetpd\relax}
\let\oldtodo\todo
\DeclareDocumentCommand{\todo}{ O{} +g +d<> }{%
	\IfNoValueTF{#2}{\relax}{%
		\oldtodo[caption={#2},size=\footnotesize,#1]{\renewcommand{\baselinestretch}{1}\selectfont\sffamily#2\par}%
	}%
	\IfNoValueTF{#3}{\relax}{%
		\IfNoValueTF{#2}{
			\begin{note}%
				\begin{internallinenumbers}%
					\indent%
					#3%
				\end{internallinenumbers}%
			\end{note}%
		}{
			\vspace{-0\baselineskip}%
			\begin{note}%
				\begin{internallinenumbers}%
					\indent%
					#3%
				\end{internallinenumbers}%
			\end{note}%
		}%
	}%
}%
\def\bkE{{\rm I\kern-.17em E}}
\def\bk1{{\rm 1\kern-.17em l}}
\def\bkD{{\rm I\kern-.17em D}}
\def\bkR{{\rm I\kern-.17em R}}
\def\bkP{{\rm I\kern-.17em P}}
\def\bkZ{{\bf{Z}}}
\def\bkE{{\rm I\kern-.17em E}}
\def\bk1{{\rm 1\kern-.17em l}}
\def\bkD{{\rm I\kern-.17em D}}
\def\bkR{{\rm I\kern-.17em R}}
\def\bkP{{\rm I\kern-.17em P}}
		\def\bkE{{\rm I\kern-.17em E}}
		\def\bk1{{\rm 1\kern-.17em l}}
		\def\bkD{{\rm I\kern-.17em D}}
		\def\bkR{{\rm I\kern-.17em R}}
		\def\bkP{{\rm I\kern-.17em P}}
		\def\bkY{{\bf \kern-.17em Y}}
		\def\bkZ{{\bf \kern-.17em Z}}
		\def\bkC{{\bf  \kern-.17em C}}
\def\del{\partial}
\let\forallnew\forall
\renewcommand{\forall}{\forallnew\ }
\let\forall\forallnew
\def\b12{(\beta_1,\beta_2)}
\newcommand{\Real}{\ensuremath{\mathbb{R}}}
\def\ind{{\rm ind}}
\def\hbar{\skew{4.2}\bar h}
\def\L{_{\scriptscriptstyle L}}
\def\M{_{\scriptscriptstyle M}}
\def\N{_{\scriptscriptstyle N}}
\def\Q{_{\scriptscriptstyle Q}}
\def\P{_{\scriptscriptstyle P}}
\def\R{_{\scriptscriptstyle R}}
\def\Y{_{\scriptscriptstyle Y}}
\def\Z{_{\scriptscriptstyle Z}}
\def\superstar{^{\raise 0.5pt\hbox{$\nthinsp *$}}}
\def\SUPERSTAR{^{\raise 0.5pt\hbox{$*$}}}
\def\lamstarT {\lambda^{\raise 0.5pt\hbox{$\nthinsp *$}T}}
\def\del{\partial}
\newlength{\noteWidth}
\long\def\notes#1{\ifinner
{\tiny #1}
\else
\marginpar{\parbox[t]{\noteWidth}{\raggedright\tiny #1}}
\fi\typeout{#1}}
 \def\notes#1{\typeout{read notes: #1}} 
\newcommand{\ie}{i.e.\@\xspace} 
\newcommand{\eg}{e.g.\@\xspace} 
\def\spose#1{\hbox to 0pt{#1\hss}}
\def\text #1{\hbox{\quad#1\quad}}
\def\T{^T\!}
\newcommand{\pushright}[1]{\ifmeasuring@#1\else\omit\hfill$\displaystyle#1$\fi\ignorespaces}
\newcommand{\pushleft}[1]{\ifmeasuring@#1\else\omit$\displaystyle#1$\hfill\fi\ignorespaces}
\def\nthinsp{\mskip -2   mu}
\newcommand{\aak}[1]{  \ifthenelse{\boolean{showcomments}}
{ \textcolor{blue}{(AAK says:  #1)}} {}  }
\newtheorem{theorem}{Theorem}[section]
\newtheorem{proposition}[theorem]{Proposition}
\newcounter{example}
\renewcommand{\theexample}{\thesection.\arabic{example}}
\newcounter{remark}
\renewcommand{\theremark}{\thesection.\arabic{remark}}
\newenvironment{remarkc}[1][]{\refstepcounter{remark}
\par\medskip \noindent%
   \textbf{Remark~\theremark. #1} \rmfamily}{\hfill $\square$   \hspace{-4.5pt} \vspace{6pt}}
		\def\bsp{\begin{split}}
		\def\beq{\begin{eqnarray}}
		\def\bal{\begin{align*}}
		\def\bc{\begin{center}}
		\def\be{\begin{enumerate}}
		\def\bi{\begin{itemize}}
		\def\bs{\begin{small}}
		\def\bS{\begin{slide}}
		\def\ec{\end{center}}
		\def\ee{\end{enumerate}}
		\def\ei{\end{itemize}}
		\def\es{\end{small}}
		\def\eS{\end{slide}}
		\def\eeq{\end{eqnarray}}
		\def\eal{\end{align*}}
		\def\esp{\end{split}}
		\def\qed{ \vrule height7.5pt width7.5pt depth0pt}  
	\def\cp2problem#1#2#3#4{\fbox
		 {\begin{tabular*}{0.9\textwidth}
			{@{}l@{\extracolsep{\fill}}l@{\extracolsep{6pt}}l@{\extracolsep{\fill}}c@{}}
				#1 & & $#4 $ 
			\end{tabular*}}}
		\renewcommand{\emph}[1]{\textbf{#1}}
		\def\bkE{{\rm I\kern-.17em E}}
		\def\bk1{{\rm 1\kern-.17em l}}
		\def\bkD{{\rm I\kern-.17em D}}
		\def\bkR{{\rm I\kern-.17em R}}
		\def\bkP{{\rm I\kern-.17em P}}
		\def\bkZ{{\bf{Z}}}
\newcommand {\beeq}[1]{\begin{equation}\label{#1}}
\newcommand {\eeeq}{\end{equation}}
\def\texitem#1{\par\smallskip\noindent\hangindent 25pt
               \hbox to 25pt {\hss #1 ~}\ignorespaces}
\alias{\P}{\mathbb{P}}
\def \ud{\mathrm{d}}
\begin{document}

\begin{frontmatter}



\title{\LARGE \textbf{Signalling for Electricity Demand Response: \\When is Truth Telling Optimal?}}


\author[1]{Ren{\'e} A{\"\i}d}
\author[2]{Anupama Kowli}
\author[3]{Ankur A. Kulkarni}
\address[1]{University Paris Dauphine --- PSL Research University --- LEDa UMR 8007-260, {Paris}, {France}, {rene.aid@dauphine.psl.eu}}
\address[2]{{Electrical Engineering, Indian Institute of Technology Bombay},
	{Powai},
	{Mumbai},
	{400076},
	{Maharashtra},
	{India. anu.kowli@iitb.ac.in}}
\address[3]{{Systems and Control Engineering, Indian Institute of Technology Bombay},
	{Powai},
	{Mumbai},
	{400076},
	{Maharashtra},
	{India. kulkarni.ankur@iitb.ac.in}}
\begin{abstract}
Utilities and transmission system operators (TSO) around the world implement demand response programs for reducing electricity consumption by sending information on the state of balance between supply demand to end-use consumers. We construct a Bayesian persuasion model to analyse such demand response programs. Using a simple model consisting of two time steps for contract signing and invoking, we analyse the relation between the pricing of electricity and the incentives of the TSO to garble information about the true state of the generation. We show that if the electricity is priced at its marginal cost of production, the TSO has no incentive to lie and always tells the truth. On the other hand, we provide conditions where overpricing of electricity leads the TSO to provide no information to the consumer.
\end{abstract}

%

\begin{keyword}
Bayesian persuasion, demand response, pricing, signalling
\end{keyword}

\end{frontmatter}

\section{Introduction}
The last decade has seen significant growth and deployment of renewable energy resources in the power sector \cite{iea2021renewables}. Since renewable power generation is inherently uncertain, variable and hard to control, there is a need to explore sources of flexibility such as energy storage, flexible loads and controllable generators that can help mitigate the effects of renewable power. In particular, last decade has seen several demand response programs where load curtailments help balance out supply deficits. 
These curtailments can be mandated, or voluntary as a response to price/tariff signals, control signals or other means of signals. 
For instance, utilities may have direct load control contracts with end users or may impose peak pricing tariffs on them to induce load curtailments and/or shifting. 
Such demand response schemes are known to lower electricity prices and electricity production costs \cite{energy2022quarterly}. In this paper, we are interested in analyzing the outcomes of demand response programs where participation is voluntary and consumers react to signals received from the coordinator for DR. 

The setting considered in this paper is relevant in situations where demand response is induced using alternative signals other than pricing. For instance, the {\em MonEcoWatt} program of the French TSO,  RTE~\footnote{\url{https://www.monecowatt.fr}, only in French} signals the state of the electric systems to induce a response. This is achieved by sending phone text messages when the system is stressed to voluntarily enrolled consumers who receive no economic compensation for their efforts, if any. The signal takes the form of a color: green when no need of action is required, yellow when the load is high, orange when reduction actions would be welcomed and red when they are mostly necessary to avoid load shedding.  The French TSO has measured that orange alerts reduced the peak-load consumption by 3\% in the region of Britain. \myblue{There are similar other programs such as {\em Shave The Peak}\footnote{https://www.greenenergyconsumers.org/shavethepeak} program in MA, USA and the {\em Power Alert}\footnote{https://www.eskom.co.za/eas/power-alert/} program in South Africa, where signals are sent out to enrolled users to indicate the grid stress. The signal can be binary -- stress or no stress -- or coloured where the color indicates the level of stress.}
There is not much academic literature on understanding the impacts of such signalling on the actions of the end users. 

%
It is plausible that the impacts of such signalling would also depend on the pricing regime governing the consumer's behaviour. Signalling and pricing could create conflicting incentives, or create overlapping effects. Again, there does not appear to be any literature covering this matter. We seek to fill this gap with a formal model of signalling in such settings.

\subsection{Main contributions}
We are concerned with a setting in which a TSO and a consumer are bound by a pre-agreed pricing contract that is in force for the duration being considered here.  This contract is designed taking into account the statistics of the generation known at the time of signing the contract. However, owing to the large time duration in the contract, scenarios could arise wherein the contract is valid but not optimal any more given changed circumstances. In addition, in real time specific values of the uncertainty get realized due to natural fluctuations and the TSO may want to take advantage of this information. While the pricing contract cannot be modified on a scenario-wise basis, we explore the possibility for the TSO to influence the consumer's action by sending signals to persuade the consumer to act accordingly. The design of such signals and their truthfulness is the subject of this paper.


We model a setting in which a TSO and a consumer, share, in addition to a binding pricing contract, a common prior distribution on the level of future generation. However, the realized level of generation is known only to the TSO. The TSO must decide how it may use this fresh information to influence the consumer. Decision-theoretically speaking, the consumer is the only player that can act in our setting: the amount of consumption chosen by the consumer determines the payoff of both players. Thus the TSO's decisions are limited to only influencing this choice of the consumer by reshaping the latter's information by sending suitable signals.

Formally we model our setting as that of Bayesian persuasion with the TSO as the sender and the consumer as the receiver of information about a state. In such a setting, the sender sends signals\footnote{We assume a setting of costless signalling, popularly known as \textit{cheap talk}.} and the receiver updates its belief about the unknown state using Bayes rule. The problem for the TSO is to design signals such that the TSO obtains the maximum payoff under the optimal action chosen by the consumer under the posterior distribution. 

Our quest is to illuminate the role of truthtelling on the part of the TSO. More specifically, when do the signals sent by the TSO constitute truth, and nothing but the truth? Our main finding is that there is a close link between the nature of the contract and truthtelling -- if the contract is such that the price equals the marginal cost of generation, truthtelling is optimal. In other words, if the contract is ``priced to perfection'' in that it has accurately estimated the probability distribution of the future generation, there are no additional benefits to be drawn from reshaping the truth. Similar results hold even in the case where there are multiple consumers. 

Fascinatingly ``mispriced'' contracts may also induce truthtelling -- this happens to be the case when the mispricing is in a direction where telling the truth is beneficial to the TSO. When the mispricing is in the opposite direction, garbling  the truth is optimal.

\subsection{Related works}
There is a large academic literature on demand response in electricity from the lens of \textit{mechanism design}~\cite{li2017mechanism,muthirayan2019mechanism,muthirayan2016mechanism,nekouei2014game,cao2012optimal,Fahrioglu00}. These works focus on pricing or other mechanism for influencing behaviour, or for eliciting information from consumers regarding their types (\eg, \cite{muthirayan2016mechanism} focuses on reporting baselines). Our approach is in a sense \textit{dual} to this -- in our case the TSO \textit{provides} information (possibly garbled) to the consumer.

Consequently, in our analysis, we ignore two features -- adverse selection and moral hazard -- of incentives theory. 
Adverse selection is relevant to settings where there are multiple types of consumers and the TSO is ignorant of the type of the consumer but knows the distribution of types in the population. In such cases, the pricing rule may depend on the optimal reduction effort as shown in \cite{Fahrioglu00} and \cite{Crampes15}. We circumvent the adverse selection problem by restricting our analysis to only one type of consumer. 
The moral hazard problem  pertains to the settings where the TSO cannot distinguish when the load curtailment is the result of the action of the consumer and when it is purely the outcome of a random event. In such a situation, the pricing rule has to take into account this feature (see \cite{aid2022optimal}). We simplify our analysis by restricting our setting to cases where the consumption is a deterministic function of the consumer's action. 

There at least four key differences between the instruments of long-term pricing and real-time signalling -- the timescales at which these are implemented, the knowledge they assume of the uncertainty, the costs involved in implementing them, and the compliance they induce. 
Incentives such as pricing rules and payment terms, once decided are held fixed for long  periods of time. As such they are usually updated on a time frame of a few months or even years. Signals, on the other hand, can be sent to the consumer on a much more frequent basis. Indeed, the TSO may opt to send a fresh signal whenever it has new information, such as updated weather forecasts, any maintenance related down times and so on. The pricing contract is designed before the realization of uncertainty whereas signals are adapted to realized uncertainty. In addition,  there is also a difference in the costs involved. Contracts entail legal, regulatory and marketing costs, whereas in our model, signals can be sent virtually free of cost via notifications on mobile phones or SMSes as the MonEcoWatt program. Finally, contracts are of a legally binding nature and therefore ensure compliance. On the other hand signals can be ignored by the consumer.

In a sender-receiver interaction~\cite{myerson1997game}, the setting where the sender shapes the probability distribution of the receiver by sending suitable signals is called \textit{Bayesian persuasion.}
In this setting, the sender and receiver have a prior distribution on a payoff-relevant state, whose realized value is known only to the sender. The sender sends signals, upon receiving which, the receiver computes a posterior distribution on the state using Bayes rule. The receiver then takes an action that maximizes the expected utility computed with this posterior distribution. In the simplest signalling setting, the sender has no action of its own, and the receiver's action determines the payoff both players. 
The goal of the sender is to design a signalling policy anticipating this response of the sender.

The persuasion setting falls broadly in the area of \textit{information design}~\cite{kamenica2018bayesian}. The use of randomization by the sender leads to the receiver getting `mixed signals'; the result being that the true type of the sender gets obfuscated, and a benefit accrues to the sender. This deliberate, strategic ambiguity is a hallmark of such settings, see~\cite{kamenica2011bayesian}, \cite{forges2020annals} and \cite{forges2008journal}. Having said that, there are also models where randomization is \textit{not} employed, and yet qualitatively similar phenomena occur (see e.g.,~\cite{deori2022information}). 
Qualitatively similar behaviour has also been seen in the \textit{screening} setting~\cite{rasmusen_games_2006} where the receiver commits first, (see e.g., \cite{vora2021optimal} and \cite{vora2021information}),
and where players move simultaneously~\cite{crawford1982strategic}. 

Considering that the signalling generally results in the sender obfuscating the truth, it instructive to ask when is truth telling optimal for the sender? Distinct classes of equilibria called separating, pooling and semi-separating equilibria, have been studied~\cite{sobel2009signaling} in this context. In a separating equilibrium, the receiver is able to discern the exact type from the sender's signal, whereas in a pooling equilibrium, all sender types employ the same signal. An equilibrium that is neither separating nor pooling is called semi-separating. Recent work~\cite{deori2022information} characterizes these equilibria in a non-Bayesian setting through a quantity called the \textit{informativeness} of the sender, in a specific setting where the receiver is interested in knowing the sender's true type.
Our contribution in this paper lies in relating truth-telling to the pricing of a pre-existing  contract, and, in particular, to estimation of future uncertainties. To the best of our knowledge this connection has not been established before, more so in the context of demand response.

\section{Model}

\subsection{Model: single consumer}

We consider an electricity market where the TSO acts as single buyer of all electricity generation from conventional producers and renewable energy producers. We assume the generation is served by enough producers so that the market can be considered competitive. We assume that the TSO has a perfect forecast of the renewable generation denoted $ q $, but this level is not known to the consumers. \myblue{Indeed, even though forecasting renewable energy generation at a daily-time scale is yet an imprecise exercise, the knowlegde of the consumers are significantly lower compared to TSO's.} In this section we consider that consumers can be aggregated in a single representative consumer whose current consumption is denoted $ y_0 $.

The consumer can take an action $ a  \in \Real $ to modify its consumption. \myblue{To focus on the effects of the information known to the TSO, we assume that there is no noise in the action of the consumer whereby the consumption after action is given as $y = y_0 - a$. Thus, by observing the consumption, the TSO knows the action of the consumer and is not facing an observation problem as in the moral hazard demand-response model in \cite{aid2022optimal}.} The consumer's decision $a$ has to be taken prior to the knowledge of the true level of renewable generation $q \in \Real$. The consumer and the TSO are bound by a pricing contract where the per unit price charged for consumption is a function $ p(y, q) $ of the consumption $ y $ and the generation $ q $. We assume this function takes the following form
$$
p(y,q) = p(y_0 - q - a),
$$
whereby the unit price is a function difference between the consumption and the generation. Note that, though the generation $ q $ is not known to the consumer, the above pricing rule is fixed and known.
We define the utility of the consumer by
\begin{equation}
\begin{split}
	U(q, a) :=	&  \quad u(y) - c(a)  - p(y-q-a) y,
\end{split}
\end{equation}
where $u(y)$ is the indirect utility of consuming electricity $y$, $c(a)$ is the cost of making undesired modification of consumption $a$. We take the instanciation
$$u(y) := u y\  \text{and}\  c(a) := \tfrac{1}{2} c a^2,$$
where $u, c>0.$

\myblue{The TSO is assumed to know the utility function of the representative consumer.}

We assume that both the TSO and the consumer share the same prior on $q$ denoted by $\P^0$, and under this prior $q$ has bounded support and finite variance. This prior can be thought as the known statistical distribution of wind during the year.  \myblue{The TSO observes the generation $ q $ and sends a message to the consumer.}

The \textit{signalling policy} of the TSO specifies a probability distribution on the messages for each value of $ q $. The policy is denoted $\pi( \tilde q | q )$, where $\tilde q$ is the message sent and $q$  the observed renewable energy generation. We assume that the signal and the renewable generation share the same bounded support. Telling the truth, all the truth and nothing but the truth consists in setting $\pi(\tilde q | q) = 1$ iff $\tilde q = q$.

\begin{remarkc}[Cheap talk:]
	We assume that signals can be sent free of cost, commonly known as the setting of \textit{cheap talk}.
	In other words, the cost of acquiring information, computing the signal and that of transmitting it is negligible. Physically, our assumption can be justified by observing that for the TSO all these costs are invariant with the actual signal being sent. Weather probes would typical incur a fixed cost and a constant operating cost irrespective of the weather itself. Computation and transmission costs have fallen dramatically over the last two decades and can be assumed to be negligible.
\end{remarkc}

\myblue{\begin{remarkc}[Committment:]
	We assume that the TSO commits to a signalling policy. Indeed, if the distribution of the renewable energy is stationary and if the pricing rule does not change, the TSO has no incentive to deviate from an optimal signalling policy.
\end{remarkc}}

\myblue{\begin{remarkc}[Pricing rule:]
We assumed that the pricing rule was defined prior to the design of information provision. One could object that the regulator might also act as player in the game between the TSO and the consumers which would make the pricing rule endogenous, provided a suitable objective function of the regulator was given. Yet, this setting might not completely reflect some electricity markets design where retailers might themselves develop demand-response contracts. Thus, the interaction between the pricing of electricity contracts to consumers and the information provision by TSOs is a problem beyond this model. For that reason, we postpone to further research work the analysis of these situations.
\end{remarkc}}

For a given signalling policy $\pi$ and for a given signal $ \tilde{q} $, the consumer's problem is to solve:
\begin{align}
	\sup_{a} J(a; \tilde q, \pi) := \E\big[U(q, a) | \tilde{q} \big]. \label{eq:consumerprob}
\end{align}
Assuming this problem admits a unique solution, we denote the best response of the consumer by $\hat a^\pi(\tilde q)$.

Knowing the best-responses $\hat a^\pi$ of the consumer to a given signalling policy $\pi$, the TSO chooses the signalling policy to solve,
\begin{align} \label{eq:TSO}
\inf_{\pi} J^{{\rm T}}(\pi) := &\E \big[ f( y_0 - \hat a^\pi(\tilde q) - q) \\
		&- p(y_0 - \hat a^\pi(\tilde q) - q)\big(y_0 - \hat a^\pi(\tilde q)\big) \big], \nonumber
\end{align}
where the expectation is taken under  $\pi \otimes \P^0$ for the realisation of the pair  $(\tilde{q}, q)  $ and $ f(x) $ is the cost that the TSO has to bear due to the mismatch $x:= y_0 - \hat a^\pi(\tilde q) - q$ between the renewable generation and the consumption. This may be interpreted as the cost of procuring conventional generation in the event of a shortfall in renewable generation. Note that $x<0$ can be interpreted as the activation by the TSO of contractual demand-response. We assume a linear quadratic production cost function, i.e.,
\begin{align}
	f(x) := k x + \frac12 \beta x^2, \quad 0 < k, \quad 0 < \beta,
\end{align}
so that the marginal cost curve is affine in production. Although in real life, the marginal cost of electricity production exhibits a steeper curvature, the assumption we make is common in the economic literature of electricity market regulation as it allows closed-form expressions for comparative static studies (see examples in \cite{aid2011} and \cite{Crampes15}). Besides, linear-quadratic preferences has also been investigated in the Bayesian persuasion economic literature as an important case (see \cite{tamura2018bayesian}).

The pricing rule is defined as the marginal cost of production of electricity, namely
\begin{align}
p(x) 	= f'(x) =  k  +  \beta x.
\end{align}
The significance and implications of this assumption will be discussed in the following section. We will later consider deviations from this pricing rule. \myblue{But note that taking the pricing rule as the marginal cost of production induces an effort reduction of the consumer that precisely corresponds to social welfare maximisation in the case of full information on the renewable generation as shown in Boiteux's marginal cost pricing of electricity theory (\cite{Boiteux51,Boiteux60})}.

We now reflect on how the consumer approaches this problem. To solve his  problem \eqref{eq:consumerprob}, the consumer computes the posterior probability that the true generation is $q$, upon receiving the message $\tilde q$:
\begin{align}
	&\P(  q | \tilde q ) =   \frac{\pi( \tilde q | q ) \P^0(q)}{  \E^{\P^0}\big[ \pi( \tilde q | q) \big]  }, \\
	&  \text{with}
	\E^{\P^0}\big[ \pi( \tilde q | q) \big]\!\!  = \!\!\int \!\! \pi(\tilde q | q ) \ud\P^0(q).
\end{align}
assuming that the denominator above is nonzero. Thus the expectation in \eqref{eq:consumerprob} is taken with respect to the posterior distribution above defined by $ \P(  q | \tilde q ). $

\section{Results}

\subsection{When is truth telling is optimal?}
We now present the first main contribution of this paper. Recall that we have assumed that the TSO and the consumer have agreed on a pricing contract.

Recall also that we had assumed that the unit price was equal to the marginal cost of generation. This assumption implies that the cost of generation is known perfectly and this cost has been \textit{priced in} in the pricing formula. However, the pricing formula states that the price is a function of the difference between the generation $ q $ and the consumption $ y.$ Since $ q $ is not known to the consumer, a natural question is whether the TSO should convey some information to about $ q $ to the consumer. And if so, what should be the information it should convey?

	Our first main result shows that when the pricing is ``perfect'' in a certain sense, the optimal signal for the TSO is to disclose $ q $ without any ambiguity. In other words, truth telling is optimal for the TSO.

We define the following constants
\begin{align*}
a_0 := \frac{k+2\beta y_0 - u}{2\beta+c}, \quad  \lambda := \frac{\beta}{2\beta+c}
\end{align*}

\begin{proposition} \label{prop:truth} It holds that
\begin{enumerate}[\upshape (i)]
	\item The best-response function of the consumer $\hat a^{\pi}$ for a given signalling policy $\pi$  is given by
\begin{align}
\hat a^{\pi} = a_0 - \lambda \hat q, \quad  \hat q := \E\big[q | \tilde q\big].
\end{align}
As a result, $\E\big[ \hat a^\pi \big] = a_0 - \lambda \E\big[ q \big]$  does not depend on~$\pi$.
	\item  It is optimal for the TSO to tell the truth.
	\end{enumerate}
\end{proposition}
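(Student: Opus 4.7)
My approach splits the two parts naturally.

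For part (i), I would exploit the fact that $U(q,a)$ is affine in $q$: after substituting $u(y) = uy$, $c(a) = \tfrac12 ca^2$, $y = y_0 - a$, and $p(\cdot) = k + \beta(\cdot)$, the dependence on $q$ in $U(q,a)$ is linear. Hence $\E[U(q,a)\mid\tilde q]$ depends on $q$ only through $\hat q := \E[q\mid\tilde q]$. The resulting expression is strictly concave quadratic in $a$, and I would read off the closed form $\hat a^\pi = a_0 - \lambda\hat q$ from a first-order condition. That $\E[\hat a^\pi]$ is independent of $\pi$ then follows immediately from the tower property: $\E[\hat q] = \E[\E[q\mid\tilde q]] = \E^{\P^0}[q]$, which depends only on the prior.

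For part (ii), my plan is to reduce the TSO's objective to a single $\pi$-dependent quantity. Setting $x := y_0 - \hat a^\pi - q$ and $y := y_0 - \hat a^\pi$, so that $y = x + q$, I would expand $f(x) - p(x)y$ using $f(x) = kx + \tfrac12\beta x^2$ and $p(x) = f'(x) = k + \beta x$. Exploiting the identity $x - y = -q$, this yields the clean simplification
\begin{align*}
f(x) - p(x)y \;=\; -kq + \tfrac12\beta q^2 - \tfrac12\beta y^2,
\end{align*}
whose first two terms have $\pi$-independent expectations under the joint law $\pi \otimes \P^0$. So minimising $J^{{\rm T}}(\pi)$ is equivalent to \emph{maximising} $\E[y^2]$. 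Substituting $y = x_0 + \lambda\hat q$ with $x_0 := y_0 - a_0$ from part (i), and using $\E[\hat q] = \E^{\P^0}[q]$ once again, the problem reduces to maximising $\E[\hat q^2]$, equivalently maximising $\var{\hat q}$.

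To close the argument I would invoke the law of total variance, $\var{q} = \E[\var{q\mid\tilde q}] + \var{\hat q}$, whence $\var{\hat q} \le \var{q}$ with equality iff $\var{q\mid\tilde q} = 0$ almost surely; finiteness of these quantities is guaranteed by the bounded-support assumption on $\P^0$. Truth telling, $\pi(\tilde q\mid q) = \mathbf{1}\{\tilde q = q\}$, yields $\hat q = q$ and saturates this bound, proving optimality. The main obstacle is verifying the algebraic collapse of the TSO's integrand: it is precisely this cancellation -- hinging on $p$ being the marginal cost $f'$ -- that strips every $\pi$-dependence except through the second moment of the posterior mean, making maximal information revelation optimal. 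Under a different pricing rule this cancellation fails and the direction of the $\pi$-sensitivity can reverse, which is presumably what drives the mispricing results announced in the introduction.
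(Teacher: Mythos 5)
Your proposal is correct and follows essentially the same route as the paper: derive the linear best response $\hat a^{\pi}=a_{0}-\lambda\hat q$ from the first-order condition, collapse the TSO's integrand (using $p=f'$) so that the only $\pi$-dependent term is $-\tfrac12\beta\lambda^{2}\E\big[\hat q^{2}\big]$, and conclude that full revelation maximizes $\E\big[\hat q^{2}\big]$. The only cosmetic difference is that you invoke the law of total variance where the paper uses the equivalent chain of Jensen inequalities $\E[q]^{2}\leq\E\big[\E[q|m]^{2}\big]\leq\E[q^{2}]$.
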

\begin{proof}
{\rm (i)}  For a given signal $\pi$, consider the optimisation problem of the consumer given by  \eqref{eq:consumerprob}.
We have  $J(a,\tilde q) = \E^{\P}\big[ U(q,a) \big].$ Taking derivative inside the expectation and writing first-order condition provides
\begin{align*}
\E^{\P}\big[ (c+2\beta) a - (c+2\beta) a_0 + \beta q \big] =0.
\end{align*}
Thus the optimal response of the consumer is given by $\hat a^\pi =a_0 - \lambda \E^{\P}[q]$ and $ \E^{\P}[q] = \E[q|\tilde q],$ which gives the result. Besides, by the law of iterated expectations, we get $\E\big[ \hat a^\pi \big]  = a_0 - \lambda \E\big[q\big].$

\noindent {\rm (ii)}  The criterion of the TSO can be written as
\begin{align*}
	& J^{\rm T}(\pi) = \E\big[ F(q) \big]  +  \beta y_0  \E\big[ \hat a^\pi \big] - \frac12 \beta \E\big[ (\hat a^\pi)^2 \big],
\end{align*}
with $ F(q)\!:=\!\frac12 \beta (y_0-q)^2\!- kq\!- \!\beta (y_0-q) y_0$. The two first terms are independent of $\pi$. Besides, because of~(i), it holds that $\hat a^\pi = a_0 - \lambda \hat q$ and thus, $\E[ (\hat a^\pi)^2] = a_0^2 -2a_0\lambda   \E[\hat q] + \lambda^2 \E[ \hat q^2]$. Because $\E[\hat q] = \E^{\P}[\hat q] = \E^{\P^0}[q]$, the only term that depends on $\pi$ in the criteria of the TSO is
$$- \frac12 \beta \lambda^2 \E\big[ \hat q^2 \big].$$ Thus, minimizing $J^{\rm T}(\pi)$ is equivalent to maximizing $\E[\hat q^2]$. Now, for any random variable $m$ correlated with $q$, the following inequalities hold from Jensen's inequality:
\begin{align}\label{eq:ineq}
\E[q]^2 \!\! = \!\E[ \E[ q|m] ]^2 \leq \E[ \E[ q|m]^2  ] \leq \E[ \E[ q^2 | m] ] = \E[ q^2].
\end{align}
When $m$ is independent of $q$ (\ie, no information is revealed),  $\E[\hat q^2] = \E[q]^2$ which is the minimum, whereas if $m=q$ (full information revelation),  $\E[\hat q^2] = \E[q^2]$ which the maximum. Thus, the TSO's criteria is minimized when he tells the truth.\hfill$\Box$
\end{proof}

\myblue{\begin{remarkc}[Increasing consumption:]
Even if what can be considered as the baseline consumption level $a_0$ is positive, it might happen that $\hat a^\pi(\tilde q)<0$. When $c$ is small, the consumer is responsive to the information on the availability of cheap energy and takes these opportunities to increase his consumption. Although our model design is not intented to capture this phenomenon, it is a behaviour looked for by TSOs. The Low Carbon London research project is an example of a large scale demand-response research program where incentives to increase consumption were proposed to consumers \cite{Schofield14}).
\end{remarkc}}

\myblue{\begin{remarkc}[Trustful consumer:]
	 Assume the consumer does not behave as a Bayesian rational decision-maker but instead fully trust the sender's signal similar to the setting considered in \cite{kulkarni2019addressing}. In that situation, the optimal response of the consumer is $\hat a^\pi(\tilde q)  = a_0 - \lambda \tilde q,$ because he would strictly believe that the signal is the truth and makes his optimisation accordingly. Hence, his actions are independent of $\pi$ and $\E[\hat a^\pi] = a_0 - \lambda \E[\tilde q]$. There are no reasons why $\E[\tilde q]$ would be equal to $\E[q]$. Redoing the computation of the result~(ii) in the Proof above for the criteria of the TSO shows that  $J^{\rm T}$ is a constant plus the term
\begin{align*}
&-\frac12 \beta \lambda^2  \E\Big[ \Big( \tilde q - \frac{a_0-y_0}{\lambda} \Big)^2 \Big].
\end{align*}
Thus, the TSO should maximize the $L^2$ distance of the signal from $\frac{a_0-y_0}{\lambda}$. Clearly, taking $\tilde{q}$ to be a constant equal to the end point the support of $q$ farthest from $\frac{a_0-y_0}{\lambda}$ maximizes the TSO's criterion.
\end{remarkc}}

Proposition~\ref{prop:truth} states that the marginal \textit{cost} pricing rule ensures that the TSO has no incentive  to garble the production forecast and that the consumer's optimal response under optimal signalling is the social optimum.  The most extreme case of garbling information would be if the TSO gives an uninformative signal. In other words, the posterior probability is equal to the prior. In that case the TSO's payoff is minimum. Thus, sending any signal is better than sending no signal at all and sending a truthful signal is the optimal policy.

\subsection{Deviation from marginal cost pricing}
We investigate now if this property still applies if we deviate from the marginal production cost pricing rule.  We assume now that the pricing function $p$ is given by
\begin{align}
	p(x) = k + b x, \quad b \neq \beta.
\end{align}
Thus,  the TSO charges a price either with a higher or lower slope compared to $\beta$. In that situation, the only change in the best-response of the consumer is that $\beta$ is replaced by $b$. Thus, we have
\begin{align*}
\hat a^\pi = a_{0,b} - \lambda_b \hat q, \quad a_{0,b} := \frac{k +2by_t - u}{c+2b}, \quad \lambda_b := \frac{b}{c+2b}.
\end{align*}
The criteria of the TSO now becomes
\begin{align*}
	& J^{\rm T}(\pi) = \E\big[ F(q) \big] + (2b - \beta) y_0  \E\big[ \hat a^\pi \big] \\
	&+ (\beta - b) \E\big[ q \hat a^\pi\big] + (\frac12 \beta - b) \E\big[(\hat a^\pi)^2\big].
\end{align*}
Besides, we have $\E^{\pi \otimes \P}\big[ q \hat q \big] = \E^{\pi \otimes \P}\big[  (\hat q)^2 \big].$
Thus, the TSO's criteria is a constant term plus
\begin{align*}
	& \lambda_b \Big[ \underbrace{ - (\beta - b) + \lambda_b (\frac12 \beta - b)}_{=:\Lambda} \Big] \E\big[\hat q^2\big].
\end{align*}

And, thus depending on the sign of $\Lambda$, maximizing $J^{\rm P}$ consists either on telling the truth or garbling information.
Solving for the threshold $b$ above which the TSO has incentive to provide no information yields the following proposition.
\begin{proposition} \label{Prop:truth2} It holds that
	\begin{enumerate}[\upshape (i)]
		\item The TSO has no incentive to deviate from telling the truth as long as
		\begin{align}\label{eq:condb}
			b \leq \bar b := \frac12 \Big( \frac32 \beta - c \Big) + \sqrt{ \frac14 \Big( \frac32 \beta - c\Big)^2 + \beta c }.
		\end{align}
		\item When $\bar b < b$,  it is optimal to provide no information.
	\end{enumerate}
\end{proposition}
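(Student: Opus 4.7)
The plan is to exploit the reduction of $J^{\rm T}(\pi)$ already carried out just above the proposition statement. Setting $\Lambda := -(\beta-b) + \lambda_b(\tfrac{1}{2}\beta - b)$, the TSO's criterion is a constant plus $\lambda_b\Lambda\,\E[\hat q^2]$. Since $b>0$ forces $\lambda_b>0$, the optimisation reduces to a question about the sign of $\Lambda$: when $\Lambda\leq 0$ the TSO wants $\E[\hat q^2]$ as large as possible, while when $\Lambda>0$ the TSO wants $\E[\hat q^2]$ as small as possible.

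Next I would invoke the Jensen inequality chain \eqref{eq:ineq} from the proof of Proposition~\ref{prop:truth}, namely $\E[q]^2 \leq \E[\hat q^2] \leq \E[q^2]$, whose lower bound is attained by a non-revealing signal and whose upper bound by full revelation. Consequently, $\Lambda\leq 0$ makes truth telling optimal and yields part~(i), while $\Lambda>0$ makes a fully uninformative signal optimal and yields part~(ii). Both claims therefore reduce to turning the sign condition $\Lambda\leq 0$ into an explicit bound on $b$.

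For the final step I would substitute $\lambda_b = b/(c+2b)$ and multiply through by the strictly positive denominator to obtain the quadratic inequality
\[
b^2 + \Bigl(c - \tfrac{3}{2}\beta\Bigr) b - \beta c \leq 0.
\]
Evaluating the left-hand side at $b=0$ gives $-\beta c < 0$, so this upward parabola is negative on an open interval straddling the origin; restricting to $b>0$, the relevant threshold is the unique positive root, which the standard quadratic formula identifies as the expression $\bar b$ in \eqref{eq:condb}. I do not expect a substantive obstacle here: the reduction of the criterion is already in hand and the Jensen sandwich is already established, so the remaining work is algebraic. The only point requiring care is tracking signs when clearing $c+2b$ and selecting the correct root; as a sanity check, plugging $b=\beta$ into $\Lambda$ gives $-\tfrac{1}{2}\lambda_\beta\beta < 0$, so $\beta < \bar b$ and the marginal-cost regime of Proposition~\ref{prop:truth} sits strictly inside the truth-telling zone, as it should.
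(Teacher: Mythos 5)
Your proposal is correct and follows essentially the same route as the paper, which likewise reduces the TSO's criterion to the term $\lambda_b\Lambda\,\E[\hat q^2]$, appeals to the Jensen sandwich \eqref{eq:ineq} to identify the optimal extremes, and obtains $\bar b$ as the positive root of the quadratic $b^2+(c-\tfrac32\beta)b-\beta c\leq 0$; your algebra clearing the denominator $c+2b>0$ and your sign bookkeeping (minimisation with $\lambda_b>0$, so $\Lambda\leq 0$ favours full revelation) are both accurate. The only difference is that you spell out the quadratic computation that the paper leaves implicit.
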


\begin{remarkc}
	It is remarkable that the only factors that enter in the determination of the threshold $\bar b$ are the proportional part of the pricing rule and the cost of reduction effort of the consumer. The function $\bar b(c)$ is decreasing and satisfies $\bar b(0) =\frac32 \beta$ and $\lim_{c\to \infty} \bar b(c) = \beta$. Hence, whatever the value of $c$, there is a non-empty range of values for $b$, namely $(\beta, \bar b)$ where the TSO has an incentive to deceive the consumer. When reduction efforts are costless to the consumer, the threshold $\bar b$ equals $\frac32 \beta$, meaning that the TSO starts to take advantage of the signal only when she is paid 50\% more than the marginal cost. For instance, if the TSO always predicts high renewable energy generation, the consumer's actions is $\hat a(\tilde q)  =    a_{0,b} -  \lambda \hat q,$ where $\hat q$ is anticipated to be high. Thus, the consumer anticipating low prices makes little reduction effort. The revenue term of the TSO criteria, namely $p(y_0 - q- \hat a) (y_0-\hat a)$, outweighs the production cost term $f(y_0 - q - \hat a)$.
\end{remarkc}

\subsection{Multiple consumers}
We extend the former results in the case where the TSO faces a set of $n$ consumers. We assume she uses the same signal $\pi$ for all agents and send the same messages. The criterion of the TSO now becomes
\begin{align*}
	& \inf_{\pi} J^{{\rm T}}(\pi) := \E\big[ f( n\bar y_0 -  n \bar a -  n \bar q) \\
	& - p( n \bar y_0 - n \bar q -n  \bar a )\big(n \bar y_0 - n \bar a)\big) \big], \quad
	 p(x) = k + b x, \\
	 &\bar y_0 :=  \frac1n \sum_{j=1}^n y^0_j, \quad
	 \bar a :=  \frac1n \sum_{j=1}^n a_j, \quad \bar q := \frac1n Q.
\end{align*}
where the total renewable generation $Q$ is assumed to have a bounded support. Again, the TSO criterion can be rewritten as
\begin{align*}
	&J^{{\rm T}}(\pi) =  \E\big[ F_n(\bar q) \big] \\
	&+ n^2 \Big\{  (2b - \beta)  \bar y_0  \E\big[ \bar a \big]
	+  (\beta - b) \E\big[ \bar q \bar a\big] +   \Big(\frac12 \beta - b\Big) \E\big[\bar a^2\big] \Big\}.
\end{align*}
with $F_n(\bar q) :=  \frac12 n^2  \beta (\bar y_0 - \bar q)^2  - k n \bar q  - b n^2 ( \bar y_0 - \bar q) \bar y_0.$  Thus, the TSO only has to care about the average response of the consumers $\bar a$ to her signal $\pi$. On their side, consumers act as Bayesian Nash players sharing all the same prior on $\bar q$ as the TSO.  For each consumer, we set
\begin{align*}
	 U^i(\bar q, a_i,a_{-i}) := & u_i(y^i_{0}-a_i) - c_i(a_i) \\
	&- p(n \bar y_0 - n \bar q - n \bar a) (y_i^{0}-a_i).
\end{align*}
Each consumer wishes to maximise
\begin{align*}
	J^i(a_i,a_{-i}; \tilde q) := \E ^{\P}\Big[U^i(\bar q,a) | \tilde q \Big].
\end{align*}

\myblue{For a given signalling policy $\pi$, a Nash equilibrium of the consumers is vector $(a_i^{\ast,\pi})$ of functions of the message $\tilde q$ such that for $i=1,\ldots,n$ and for all $\tilde q$
\begin{align*}
	J^i(a_i^{\pi},a_{-i}^{\ast,\pi}; \tilde q) \leq J^i(a_i^{\ast,\pi},a_{-i}^{\ast,\pi}; \tilde q), \quad \text{for all} a_i^{\pi}.
\end{align*}}
We observe that the game played by consumers is an exact potential game (see \cite{Monderer96}). Define the potential function
\begin{align*}
\Phi(a) &:= \sum_{i=1}^n \big[ e_i a_i - \frac12 \Big( b + c_i \Big) a_i^2\big] - \frac{n}{2} b (\bar y_0 - \bar a - \hat{\bar q})^2, \\
e_i &:= k - u_i +  b  y^0_i.
\end{align*}
It holds that
\begin{align*}
\frac{\partial \Phi}{\partial a_i} (a) &= \frac{\partial U^i}{\partial a_i}(\bar q, a), \quad \text{for all}\quad i=1,\ldots,n.
\end{align*}
Moreover, $U^i(a_i,a_{-i})$ is strictly concave in $a_i$ for each $a_{-i}$. Hence the
Nash equilibria of the game with payoffs $U^i$ are equivalent to maximizers of the potential functions $\Phi$. As $\Phi$ is quadratic and negative definite, the equilibrium is unique and the Nash equilibrium is linear in $\hat{\bar q} := \E[\bar q|\tilde q]$. We have
\begin{align*}
	&  \bar a^\pi = {\rm const} - \hat \lambda_b \hat{\bar q}, \, \text{with}\, \hat \lambda_b := \frac{\hat c}{1+\hat c}, \quad
	\hat c := \frac1n  \sum_{i=1}^n \frac{b}{b+c_i}.
\end{align*}
Since the TSO's objective depends only on $\bar{a}$,  we are in the same setting as in the single case agent with a TSO facing one agent with average response $\bar a$. Hence, the condition of truth-telling of Proposition~\ref{Prop:truth2} still applies. Yet, the threshold condition for optimality of truth telling changes. Telling the truth is optimal as long as
\begin{align*}
h(b) := b \big( 1-\hat  \lambda_b\big) - \beta \Big( 1 - \frac12 \hat \lambda_b \Big) \leq 0.
\end{align*}
Because $\hat \lambda_b \in (0,\frac12)$, for $b$ sufficiently large, the constraint is violated and the TSO has incentive to provide no information. Uniqueness of the threshold is obtained by the monotonicity and continuity of the function $h$. But, unless some hypothesis are made on the $c_i$'s, it is unlikely to extract a condition as simple as~\eqref{eq:condb}. Nevertheless, if we denote by $b^\ast_n({\bf c})$ with ${\bf c} := (c_i)_{1\leq i \leq n}$, the threshold for $n$ players with a profile of reduction effort costs ${\bf c}$, it holds that
\begin{align}\label{eq:bastn}
b^\ast_1(\bar c) \leq b^\ast_n({\bf c}), \quad \text{with} \quad \bar c := \frac1n \sum_{i=1}^n c_i.
\end{align}
Relation~\eqref{eq:bastn} states that a population of $n$ consumers is less vulnerable to misinformation rather than a single representative consumer with the average characteristics of the population.


\section*{Acknowledgments}
We warmly thank one of the referees for pointing properties~\eqref{eq:ineq} and the fact that the consumers' game is a potential game.  The research in this paper was supported by the grant 6301-2 from the Indo-French Centre for the Promotion of Advanced Research. Ankur's work was also supported by the grant CRG/2019/002975 of the Science and Engineering Research Board, Department of Science and Technology, India. Ren\'{e}'s work was supported by the Chair EDF-CA CIB of Finance and Sustainable Development, the Finance for Energy Markets Research Initiative and the ANR project EcoREES ANR-19-CE05-0042.

\appendix
\renewcommand*{\thesection}{\Alph{section}}
\section{Proof}

We establish inequality~\eqref{eq:bastn}. Note that
\begin{align*}
b^\ast_1(\bar c) = \frac{b}{b+\bar c} \frac{1}{1+\frac{b}{b+\bar c} } = \frac{b}{\bar c+2b},
\end{align*}
which is the threshold in the case of one representative consumer in Proposition~\ref{Prop:truth2} with cost of reduction $\bar c$. Denote
\begin{align*}
f(b) :=   \lambda_b \Big(\frac12 \beta - b \Big) + b - \beta, \quad \text{with} \lambda_b := \frac{b}{\bar c+2b}, \\
\hat f(b) :=   \hat \lambda_b \Big(\frac12 \beta - b\Big) + b - \beta, \quad \text{with} \hat \lambda_b := \frac{\hat c}{1+\hat c}.
\end{align*}
Both $f$ and $\hat f$ are monotonic non-decreasing functions. We have
\begin{align*}
f(b)-\hat f(b) =   \big( \lambda_b -\hat \lambda_b\big) \Big(\frac12 \beta - b \Big).
\end{align*}
We have $f(\frac12\beta) <0$ and $\hat f(\frac12\beta) <0$; thus, the thresholds are such that $\frac12 \beta < b$. Besides, by Jensen's inequality, it holds that
\begin{align*}
\frac{b}{b+\bar c} \leq \hat c = \frac1n \sum_{i=1}^n \frac{b}{b+c_i},
\end{align*}
and thus $\lambda_b < \hat \lambda_b$. Hence, for $b$ larger than $\frac12 \beta$, the difference $f(b)-\hat f(b)$ is positive and thus, the zero of $\hat f$ is larger than the zero of $f$.

\bibliographystyle{plain}

\end{document}